\documentclass[10pt,doublecolumn,twoside,journal]{IEEEtran}
\IEEEoverridecommandlockouts

\pagestyle{empty}
\usepackage{CJK}
\usepackage{flushend}

\usepackage{subfigure}
\usepackage{colortbl}
\usepackage{bm}
\usepackage{graphicx}  
\usepackage{url}       

\usepackage{amsmath}
\usepackage{cite}

\usepackage{amsfonts,amssymb}

\usepackage{stfloats}

\usepackage{cases}
\usepackage{algorithm}

\usepackage{multirow}
\usepackage{algorithmic}
\usepackage{float}


\newtheorem{theorem}{{Theorem}}

\newtheorem{proof}{Proof}[section]

\newcommand{\ls}[1]
    {\dimen0=\fontdimen6\the\font
     \lineskip=#1\dimen0
     \advance\lineskip.5\fontdimen5\the\font
     \advance\lineskip-\dimen0
     \lineskiplimit=.9\lineskip
     \baselineskip=\lineskip
     \advance\baselineskip\dimen0
     \normallineskip\lineskip
     \normallineskiplimit\lineskiplimit
     \normalbaselineskip\baselineskip
     \ignorespaces
    }

\hyphenation{op-tical net-works semi-conduc-tor}

\begin{document}

\title{Virtual Full-Duplex Wireless Communications with Zero-Interval Modulation and Sampling}
\author{
\IEEEauthorblockN{Jianyu Wang, \emph{Member}, \emph{IEEE}, Wenchi Cheng, \emph{Senior Member}, \emph{IEEE}, Wei Zhang, \emph{Fellow}, \emph{IEEE}, and Hailin Zhang, \emph{Member}, \emph{IEEE}}

\thanks{
Jianyu Wang, Wenchi Cheng, and Hailin Zhang are with State Key Laboratory of Integrated Services Networks, Xidian University, Xi'an, 710071, China (e-mails: wangjianyu@xidian.edu.cn; wccheng@xidian.edu.cn; hlzhang@xidian.edu.cn).

Wei Zhang is with School of Electrical Engineering and Telecommunications, the University of New South Wales, Sydney, Australia (e-mails: w.zhang@unsw.edu.au).

}
}
\maketitle
\thispagestyle{empty}
\begin{abstract}
In this paper, we propose a virtual full-duplex (VFD) technique with zero-interval modulation and sampling (ZIMS), where two half-duplex (HD) transceivers can simultaneously transmit signals and each transceiver can effectively receive the desired information.
In ZIMS-VFD, the transceiver inserts a zero-interval for each symbol in the transmit signal and provides self-interference (SI)-free intervals for itself. Meanwhile, it samples the receive signal in the provided SI-free intervals and restores the desired symbols. Based on orthogonal frequency division multiplexing (OFDM), we formulate the system model and show the transmit signal structure. Then, we give the transceiver design for single input single output (SISO) ZIMS-VFD and extend it to multiple input multiple output (MIMO) communications. Numerical results verify our theoretical analyses and show that ZIMS-VFD can effectively increase the capacity and approach the FD without SI.
\end{abstract}

\begin{IEEEkeywords}
Virtual full-duplex (VFD), orthogonal frequency division multiplexing (OFDM), zero-interval, modulation, sampling, multiple input multiple output (MIMO).
\end{IEEEkeywords}

\section{introduction}
\IEEEPARstart{F}{ull}-duplex (FD), where transceivers can transmit and receive signals simultaneously in the same frequency-band, has gained substantial attention due to the potential to double the capacity of wireless communication networks~\cite{FD_Magazine_2}. The most critical challenge for FD is to cancel the self-interference (SI), which is the interference leaked from the local transmitter to the local receiver~\cite{Review_3_R2,hu2022performance}. In practice, SI is hard to be completely canceled because it is much stronger than the desired signals from other transceivers~\cite{FD_MAC_FCTS,wan2023performance}. To achieve FD, a series of self-interference cancellation (SIC) techniques are desired to be comprehensively utilized~\cite{FD_Survey1}.

Based on the location where the SI is canceled, SIC techniques can be divided into propagation domain SIC, analog domain SIC, and digital domain SIC. Propagation domain SIC is the first step to cancel the SI. The authors of~\cite{Stanford} used two transmit antennas whose distances from the receive antenna differ by half a wavelength. Thus, the SI signals can add destructively. Also, in~\cite{Review_3_R3}, antenna design with resonant wavetraps has been used to improve electromagnetic isolation in a compact device. Moreover, in~\cite{Review_3_R8}, an antenna isolation architecture based on electrical balance in hybrid junctions has been proposed, which is based a single antenna and can be implemented on-chip.

After the propagation domain cancellation, the residual SI needs to be further canceled using a radio frequency (RF) or base-band analog canceller before the analog-to-digital converter (ADC). For examples, the authors of~\cite{Review_3_R4} presented a novel tapped delay line RF canceller with multiple non-uniform pre-weighted taps, which can not only cancel the SI from the direct transmit-to-receive antenna coupling but also the SI from reflection paths. Also, it is shown in~\cite{Analog_TWC_2019} that the global optimal solution to the weights of the multi-tap RF canceller is given. Moreover, in~\cite{Analog_TCS_2021}, SI is first canceled by an integrated hybrid in the RF front-end, which can track the antenna impedance variations, and then canceled in the base-band analog domain based on the down-converted sampling of the transmit signal. In addition, taking nonlinear distortion into account, the authors of~\cite{Review_3_R5} combined self-adaptive RF cancellation with a shared-antenna architecture and achieved 40 dB SIC. The SIC in the propagation domain and analog domain guarantees that the weak desired signal can be sampled by ADC and not concealed in quantization noise~\cite{Wenchi_INFOCOM_2013}.

After analog cancellation, SI needs to be further canceled in the digital domain. For examples, the authors of~\cite{Digital_TWC_2015} proposed to use the digital domain copy of RF transmit signal to mitigate both the SI and transceiver impairments. Also, it is shown in~\cite{Digital_TSSC_2017} that the receive beamforming is done in the digital domain to mitigate SI. In addition, the authors of~\cite{Review_3_R1} considered RF imperfections and showed that the dominant SI in digital domain after antenna isolation and RF cancellation can be modeled through a widely linear transformation of the original transmit data. Based on the model, they proposed a widely linear digital SIC scheme. We can observe from above works that to cancel the strong SI, a series of complex works in the propagation domain, analog domain, and digital domain are desired to be done. Thus, the system complexity of FD is relatively high as compared with half-duplex (HD).

To avoid the complex SIC, a feasible way is virtual full-duplex (VFD).
Two typical VFD techniques are VFD relaying and rapid on-off-division duplex (RODD).
In VFD relaying, two or more HD relays are used to assist the transmission from the source to the destination. The relays are separated in space or time and alternately receive and forward. The inter-relay interferences (IRI) are much weaker than SI in FD. Thus, the source can transmit information in every time-slot and the spectrum efficiency loss of HD can be compensated. The authors of~\cite{VFDR_TWC_20162} proposed a VFD relaying scheme where the selected best relay forwards the packet while other relays attempt to decode the new packet. Taking into account the IRI, the overall outage probability is obtained in closed-form. Also, instead of avoiding or canceling IRI, in~\cite{Orikumhi_TVT_2017} IRI is considered as an additional source of energy to the relay and exploited for energy harvesting. In addition, VFD cooperative non-orthogonal multiple access (NOMA) is investigated in~\cite{Kim_TWC_2019}, where a relay selection algorithm with adaptive IRI management is proposed and the corresponding diversity-multiplexing tradeoff performance is analyzed based on a discrete Markov chain.
However, VFD relaying technique focuses on relay channel with two or more relays, which cannot be extended to the point-to-point communications.

RODD is another VFD technique and can be used in point-to-point communications. The transceiver with RODD follows a random on-off mask to transmit and receives in the off-slots. Based on RODD, transceivers can transmit and receive simultaneously and achieve FD at the frame level. The authors of~\cite{RODD1} proposed RODD and investigated the throughput. In~\cite{RODD2}, mutual broadcasting with RODD signaling is investigated and a practical algorithm for encoding and decoding the short messages is provided. Also, the hardware implementation of RODD has been shown in~\cite{RODD3}. However, in RODD, SI leads to erasure-slots in the receive signal. The data corresponding to erasure-slots is lost during the transmission and needs to be restored based on channel coding. Additional bits are required to correct the data corresponding to erasure-slots, which results in the throughput reduction. To the best of our knowledge, it is currently unclear how to use HD transceivers to achieve point-to-point FD without losing the information of the signal with SI.

In this paper, we propose a point-to-point VFD communication technique with zero-interval modulation and sampling (ZIMS), which uses two HD transceivers to achieve FD communication. Specifically, a zero-interval is inserted for each symbol in the transmit signal of each transceiver to provide the corresponding SI-free interval for itself. Meanwhile, it samples the receive signal without SI in the provided SI-free interval and restores the desired symbol. Since ZIMS-VFD achieves FD with modulation and sampling, the hardware overheads associated with SIC, which usually include voluminous and power demanding components, can be reduced. In addition, as compared with RODD, where additional bits are required to restore the data of erasure-slots, ZIMS-VFD can restore the desired symbols from the samples in SI-free intervals without additional coding. Thus, ZIMS-VFD can achieve a higher capacity. Based on orthogonal frequency division multiplexing (OFDM), we formulate the system model and show the transmit signal structure. Then, we give the transceiver design for single input single output (SISO) ZIMS-VFD and extend it to multiple input multiple output (MIMO) communications. Moreover, numerical evaluations are conducted to verify our theoretical analyses.

The remainder of this paper is structured as follows. In Section II, we formulate the system model and show the transmit signal structure. In Section III, we give the transceiver design for SISO ZIMS-VFD. In Section IV, SISO ZIMS-VFD is extended to MIMO communications. Numerical evaluations are conducted in Section~V. We conclude in Section~VI.
\section{System Model and Transmit Signal Structure for ZIMS-VFD}
\begin{figure}
\centering
\includegraphics[width=0.5\textwidth]{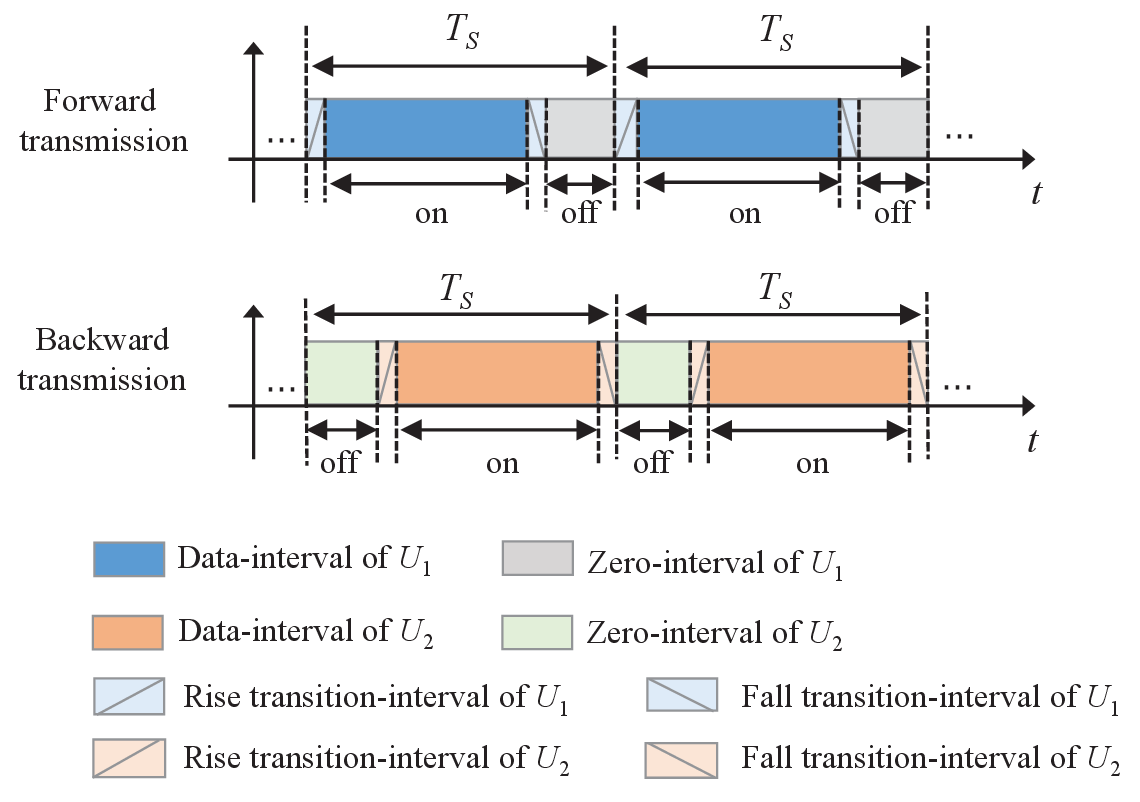}
\caption{Transmit signal structure of ZIMS-VFD.} \label{fig:Performance_Gain}
\end{figure}

We consider a point-to-point communication, where two HD transceivers, denoted by $U_1$ and $U_2$, exchange information simultaneously in the same frequency-band. The numbers of transmit antennas and receive antennas of $U_i$ are $K_{T,i}$ and $K_{R,i}$, respectively ($i\in\{1,2\}$).
We assume that the transmit signal of each transceiver is based on orthogonal frequency division multiplexing (OFDM) with $2N$ subcarriers. The frequency corresponding to the $n$-th subcarrier, denoted by $f_n$, is $f_n=f_c+n\Delta f$, where $\Delta f$ is the bandwidth of each subcarrier and $f_c$ is the center frequency ($n\in\{-N+1,...,0,...,N\}$).  It is assumed that $U_1$ and $U_2$ are well synchronized with techniques such as global positioning system (GPS)~\cite{RODD1,RODD2} and MIMO cable~\cite{RODD3}.

The transmit signal structure of ZIMS-VFD is shown in Fig.~\ref{fig:Performance_Gain}, where the forward transmission (from $U_1$ to $U_2$) and the backward transmission (from $U_2$ to $U_1$) are overlapped in the time domain.
Each transceiver uses RF switches to insert a zero-interval for each OFDM symbol period in the transmit signal. In practice, RF switching is not instantaneous and results in transitions, including rise transitions and fall transitions. Rise transitions are the switching periods from off to on. Fall transitions are the switching periods from on to off. We assume both rise transitions and fall transitions are less than a value $\delta$.
The OFDM symbol period of $U_i$, denoted by $T_{S}$, is designed to have a data-interval, denoted by $T_{D}=1/\Delta f$, a rise transition-interval, denoted by $T_{R}$, a fall transition-interval, denoted by $T_{F}$, and a zero-interval, denoted by $T_{Z}$. The expression of $T_{S}$ is given by
\begin{equation}
\begin{split}\label{T_S}
T_{S}=T_{D}+T_{R}+T_{F}+T_{Z}.
\end{split}
\end{equation}
We set $T_R=T_F=\delta$ to guarantee transitions are always within rise transition-intervals or fall transition-intervals and have no impact on zero-intervals and data-intervals.
In the zero-interval, the transmit chain of each antenna is off, which can provide an SI-free interval for local receivers.
 In the SI-free interval, the receive chain of each antenna turns on and samples the signal without SI. In the SI interval, the receive chain of each antenna turns off and shields the SI. Also, as shown in Fig.~\ref{fig:Performance_Gain}, the transmit signal structures of $U_1$ and $U_2$ are different. The data-interval of $U_1$ is before its zero-interval and the data-interval of $U_2$ is after its zero-interval. The structure difference aims to guarantee that the desired signals are non-zero in the SI-free interval.
Note that if $T_{R}+T_{F}+T_{Z}>T_{D}$, more than half of the transmit signal carries no information and ZIMS-VFD brings no symbol rate gain as compared to traditional HD. Thus, to guarantee a relatively high symbol rate, $T_D$ needs to satisfy
\begin{equation}\label{T_Z_TD}
\begin{split}
T_{D}>T_{Z}+2\delta.
\end{split}
\end{equation}

The channel model we employ is block fading. We denote by $\delta(t)$ the impulse function and ${L_{i,k,p,q}}$ the number of paths corresponding to the channel from the $p$-th transmit antenna of $U_i$ to the $q$-th receive antenna of $U_k$ ($k\in\{1,2\}$, $p\in\{1,2,...,K_{T,i}\}$, and $q\in\{1,2,...,K_{R,k}\}$). Then, based on the channel model, we can write the impulse response of the channel from the $p$-th antenna of $U_i$ to the $q$-th antenna of $U_k$ as follows~\cite{Fundamentals_of_Wireless_Communication,Digital_Communications}:
\begin{equation}\label{h_ij}
\begin{split}
h_{i,k,p,q}(t)=\!\!\!\!\sum_{l=1}^{L_{i,k,p,q}}\!\!\!\!a_{i,k,p,q}^{l}\delta(t-\tau_{i,k,p,q}^{l}),
\end{split}
\end{equation}
where $a_{i,k,p,q}^{l}$, and $\tau_{i,k,p,q}^{l}$ denote the amplitude gain and the delay of the $l$-th path, respectively.

If $ T_{Z}$ is less than the maximum delay spread of SI channels, each transceiver cannot provide SI-free intervals for the local receiver since they are covered by the local transmit signal. On the other hand, if $ T_Z$ is less than the maximum delay spread of channels from $U_1$ ($U_2$) to $U_2$ ($U_1$), inter symbol interference (ISI) occurs. Thus, to avoid ISI and provide SI-free intervals for the local receiver, $T_{Z}$ needs to satisfy
 \begin{equation}\label{T_Z_Condition}
\begin{split}
T_Z>\tau_{\max},
\end{split}
\end{equation}
where $\tau_{\max}=\max_{i,k,p,q,l}\{\tau_{i,k,p,q}^{l}\}-\min_{i,k,p,q,l}\{\tau_{i,k,p,q}^{l}\}$.
The inserted zero-intervals in the transmit signal are designed to not only help avoid SI but also inter symbol interference (ISI). Thus, to guarantee the spectrum efficiency, we don't insert additional cyclic prefixes (CPs) into the transmit signal. 
In addition, we set the bandwidth of each subcarrier to satisfy $\Delta f<1/\tau_{\text{max}}$, which is equivalent to \begin{equation}\label{T_D_Condition}
\begin{split}
T_D>\tau_{\max},
\end{split}
\end{equation}
such that each subchannel is flat.
\section{Transceiver Design for SISO ZIMS-VFD}
In this section, we first introduce the transceiver design for SISO ZIMS-VFD. The extension of ZIMS-VFD to MIMO communications is discussed in Section IV.
We consider a block with $M$ OFDM symbol periods. The transmit symbol corresponding to the $n$-th subcarrier of $U_i$ in the $m$-th OFDM symbol period is denoted by $X_{i}^{n,m}$. Then, the transmit signal of $U_1$ in the $m$-th OFDM symbol period, denoted by $x_{1}^m(t)$, can be given by
\begin{equation}\label{u_t1}
 x_{1}^m(t)=\left\{\begin{array}{l}
\begin{split}
&\!\!\!\!R_1^m(t),~ mT_S\!\leq \!t\!\!< \!mT_S\!+\!\delta;
\\
&\!\!\!\!\!\!\sum_{n=-N+1}^{N}\!\!\!\!\!\!\!X_{1}^{n,m}e^{j2\pi f_nt},~\! mT_S\!\!+\!\delta\!\!\leq\!\! t\!\!<\!\! mT_S\!\!+\!\delta\!+\!\!T_D;
\\
&\!\!\!\!F_1^m(t),~ mT_S\!\!+\!\delta\!+\!\!T_D\!\leq \!\!t\!\!<\!\! (m\!\!+\!\!1)T_S\!\!-\!\!T_Z;
\\
&\!\!\!\!0,~ (m\!\!+\!\!1)T_S\!\!-\!\!T_Z\!\leq\! t\!\!<\! (m\!+\!1)T_S,
\end{split}
\end{array}\right.
\end{equation}
where $R_1^m(t)$ and $F_1^m(t)$ are rise and fall transitions in the $m$-th OFDM symbol period of $U_1$. Similarly, the transmit signal of $U_2$ in the $m$-th OFDM symbol period, denoted by $x_{2}^m(t)$, can be given by
\begin{equation}\label{u_t2}
 x_{2}^m\!(t)\!\!=\!\!\!
\left\{\begin{array}{l}
\begin{split}
&\!\!\!\!0,~ mT_S\!\leq\!t\!< \!mT_S\!+\!T_Z;
\\
&\!\!\!\!R_2^m(t),~ mT_S\!+\!T_Z\!\leq\!t\!< \!mT_S\!+\!T_Z\!+\!\delta;
\\
&\!\!\!\!\!\!\!\sum_{n=-N+1}^{N}\!\!\!\!\!\!X_{2}^{n,m}\!e^{j2\pi f_nt},mT_S\!\!+\!\!T_Z\!\!+\!\delta\!\leq\!\!t\!\!< \!\!(m\!\!+\!\!1)T_S\!\!-\!\delta;
\\
&\!\!\!\!F_2^m(t),~ (m\!+\!1)T_S\!\!-\!\delta\!\leq\!\!t\!\!<\!\! (m\!\!+\!\!1)T_S,
\end{split}
\end{array}\right.
\end{equation}
where $R_2^m(t)$ and $F_2^m(t)$ are rise and fall transitions in the $m$-th OFDM symbol period of $U_2$.
Based on \eqref{h_ij}, \eqref{u_t1}, and \eqref{u_t2}, the signal at the receive antenna of $U_i$, denoted by $y_{i}(t)$, can be given by
\begin{equation}\label{y_2k_flat_ofdm}
\begin{split}
y_{i}(t)\!\!&=\!\!\underbrace{\sum_{m=1}^M\!x_{i}^m(t)\!*\!h_{i,i}(t)}_{\text{SI}}\!+\!\!\!\underbrace{\sum_{m=1}^M\!\!x_{3-i}^m(t)\!\!*\!h_{3-i,i}(t)}_{\text{Desired signal}}\!+\!z_{i}(t),
\end{split}
\end{equation}
where * denotes the convolution and $z_{i}(t)$ denotes the noise of $U_i$. Since SISO system is considered, the antenna subscripts $p$ and $q$ are omitted in this section.
Then, the receive signal given by \eqref{y_2k_flat_ofdm} is processed in SI-free intervals to restore the desired symbols. The details are shown in the next two subsections.
\subsection{Candidate Intervals Determination}
To avoid SI, the receive chain of each transceiver turns on in SI-free intervals and turns off in SI intervals, as shown in Fig.~\ref{fig:Receiver}. RF switching also results in transitions in the receive chain. Similar to the transmit chain, we assume that they are less than $\delta$. To restore the desired symbols, \eqref{y_2k_flat_ofdm} needs to be sampled in the intervals where the SI is zero, the transitions of the receive chain are zero, and the desired signal is within the data-interval. We call this kind of intervals candidate intervals and determine them in the following.
\begin{figure}
\centering
\includegraphics[width=0.44\textwidth]{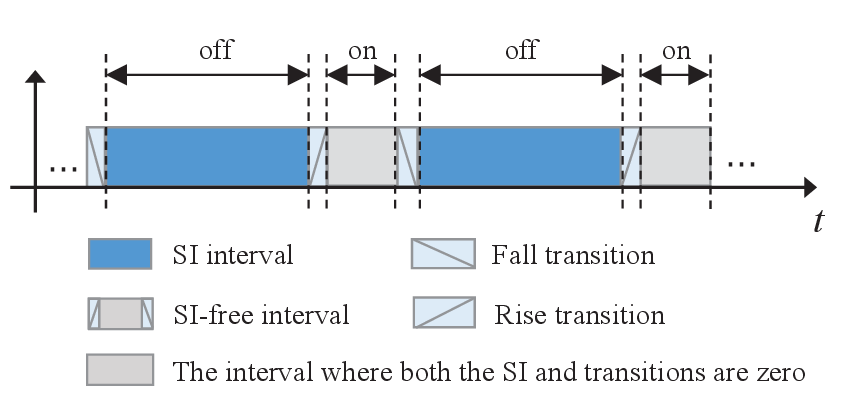}
\caption{RF switching in the receive chain of ZIMS-VFD.} \label{fig:Receiver}
\end{figure}

To determine the candidate intervals, we first investigate the SI-free intervals, where the SI is zero, for each transceiver.
Based on \eqref{h_ij}, \eqref{u_t1}, \eqref{u_t2}, and \eqref{y_2k_flat_ofdm}, the $m$-th SI-free intervals of $U_1$ and $U_2$, denoted by $\mathcal{T}_{F,1}^m$ and $\mathcal{T}_{F,2}^m$, can be expressed as
\begin{equation}\label{T_F1}
\begin{split}
\mathcal{T}_{F,1}^m\!=\!\left\{t|(m\!\!+\!\!1)T_S\!\!-\!\!T_Z\!+\!\tau_{1,1}^{\max}\!\leq\! t\!\!<\! (m\!+\!1)T_S\!+\!\tau_{1,1}^{\min}\right\}
\end{split}
\end{equation}
and
\begin{equation}\label{T_F2}
\begin{split}
\mathcal{T}_{F,2}^m=\left\{t|mT_S\!+\!\tau_{2,2}^{\max}\!\!\leq\!t\!< \!mT_S\!+\!T_Z\!+\!\tau_{2,2}^{\min}\right\},
\end{split}
\end{equation}
respectively, where $\tau_{i,k}^{\max}\!=\!\max_{l}\{\tau_{i,k}^l\}$ and $\tau_{i,k}^{\min}\!=\!\min_{l}\{\tau_{i,k}^l\}$. The RF switches corresponding to the receive chains of $U_1$ and $U_2$ start to turn on at the left ends of $\mathcal{T}_{F,1}^m$ and $\mathcal{T}_{F,2}^m$, that is, $(m\!\!+\!\!1)T_S\!\!-\!\!T_Z\!+\!\tau_{1,1}^{\max}$ and $mT_S\!+\!\tau_{2,2}^{\max}$, respectively.
Before $(m\!\!+\!\!1)T_S\!\!-\!\!T_Z\!+\!\tau_{1,1}^{\max}$ and $mT_S\!+\!\tau_{2,2}^{\max}$, the receive chains of $U_1$ and $U_2$ are in the off state, such that they can shields the SI before the left ends of $\mathcal{T}_{F,1}^m$ and $\mathcal{T}_{F,2}^m$. After a period of time $\delta$, transitions disappear and the receive chains will be in the on state. On the other hand, the RF switches corresponding to the receive chains of $U_1$ and $U_2$ start to turn off at $(m\!+\!1)T_S\!+\!\tau_{1,1}^{\min}-\delta$ and $mT_S\!+\!T_Z\!+\!\tau_{2,2}^{\min}-\delta$, respectively.
After a period of time $\delta$, the receive chains will be in the off state and transitions disappear before the right ends of $\mathcal{T}_{F,1}^m$ and $\mathcal{T}_{F,2}^m$. Thus, the SI after the right ends of $\mathcal{T}_{F,1}^m$ and $\mathcal{T}_{F,2}^m$ can be shielded. We denote by $\mathcal{T}_{FT,i}^m$ the intervals where both the SI and receive transitions of $U_i$ are zero. Then, $\mathcal{T}_{FT,1}^m$ and $\mathcal{T}_{FT,2}^m$ can be expressed as
\begin{equation}\label{T_F1}
\begin{split}
\mathcal{T}_{FT,1}^m\!\!=\!\!\left\{t|(m\!\!+\!\!1)T_S\!\!-\!\!T_Z\!+\!\tau_{1,1}^{\max}\!\!+\!\delta\!\leq\! t\!\!<\! (m\!+\!1)T_S\!+\!\tau_{1,1}^{\min}\!\!-\!\delta\right\}
\end{split}
\end{equation}
and
\begin{equation}\label{T_F2}
\begin{split}
\mathcal{T}_{FT,2}^m=\left\{t|mT_S\!+\!\tau_{2,2}^{\max}\!\!+\!\delta\!\!\leq\!t\!< \!mT_S\!+\!T_Z\!+\!\tau_{2,2}^{\min}\!\!-\!\delta\right\},
\end{split}
\end{equation}
respectively.
Next we investigate the data-interval corresponding to the desired signal of $U_i$.
Based on \eqref{h_ij}, \eqref{u_t1}, \eqref{u_t2}, and \eqref{y_2k_flat_ofdm}, the $m$-th data-intervals corresponding to the desired signals of $U_1$ and $U_2$, denoted by $\mathcal{T}_{D,1}^m$ and $\mathcal{T}_{D,2}^m$, can be given by
\begin{equation}\label{T_N1}
\begin{split}
\mathcal{T}_{D,1}^m\!\!=\!\!\left\{t|mT_S\!\!+\!\!T_Z\!\!+\!\!\delta\!\!+\!\!\tau_{2,1}^{\min}\!\!\leq\!\!t\!\!< \!\!(m\!\!+\!\!1)T_S\!\!-\!\!\delta\!\!+\!\!\tau_{2,1}^{\max}\right\}
\end{split}
\end{equation}
and
\begin{equation}\label{T_N2}
\begin{split}
\mathcal{T}_{D,2}^m\!\!=\!\!\left\{t|mT_S\!\!+\!\!\delta\!\!+\!\!\tau_{1,2}^{\min}\leq\!\! t\!\!<\!\! mT_S\!\!+\!\!\delta\!\!+\!\!T_D\!\!+\!\!\tau_{1,2}^{\max}\right\},
\end{split}
\end{equation}
respectively.

We denote by $\mathcal{T}_{C,i}^m\!=\!\mathcal{T}_{FT,i}^m\cap \mathcal{T}_{D,i}^m$ the $m$-th candidate interval of $U_i$, where the desired signal of $U_i$ is within the data-interval and the SI of $U_i$ is zero.
Then, $\mathcal{T}_{C,i}^m$ can be expressed as
\begin{equation}\label{TT_R1}
\begin{split}
\mathcal{T}_{C,i}^m=\left\{t|\chi_{L,i}^{m}\leq t\leq \chi_{U,i}^{m}\right\}
\end{split}
\end{equation}
with
\begin{equation}\label{XU1}
\begin{split}
\chi_{U,1}^{m}\!=\!\min\left\{(m\!+\!1)T_S\!+\!\tau_{1,1}^{\min}\!\!-\!\delta,(m\!\!+\!\!1)T_S\!\!-\!\delta\!+\!\!\tau_{2,1}^{\max}\right\},
\end{split}
\end{equation}
\begin{equation}\label{XL1}
\begin{split}
\chi_{L,1}^{m}\!\!=\!\!\max\left\{\!(m\!\!+\!\!1)T_S\!\!-\!\!T_Z\!+\!\tau_{1,1}^{\max}\!\!+\!\delta,mT_S\!\!+\!\!T_Z\!\!+\!\delta\!+\!\tau_{2,1}^{\min}\!\right\},
\end{split}
\end{equation}
\begin{equation}\label{XU2}
\begin{split}
\chi_{U,2}^{m}\!=\!\min\left\{mT_S\!+\!T_Z\!+\!\tau_{2,2}^{\min}\!\!-\!\delta,mT_S\!\!+\!\!\delta\!\!+\!\!T_D\!\!+\!\!\tau_{1,2}^{\max}\right\},
\end{split}
\end{equation}
and
\begin{equation}\label{XL2}
\begin{split}
\chi_{L,2}^{m}=\max\left\{mT_S\!+\!\tau_{2,2}^{\max}\!+\!\delta,mT_S\!\!+\!\!\delta\!\!+\!\!\tau_{1,2}^{\min}\right\}.
\end{split}
\end{equation}
We denote by ${T}_{C,i}^m=\chi_{U,i}^{m}-\chi_{L,i}^{m}$ the length of the $m$-th candidate interval of $U_i$
and have the following theorem.
\begin{theorem}
If $T_Z$ satisfies
\begin{equation}\label{thm}
\begin{split}
T_Z>2\delta+\tau_{\max},
\end{split}
\end{equation}
${T}_{C,i}^m>0$ can always be guaranteed.
\end{theorem}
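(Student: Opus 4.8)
The plan is to prove ${T}_{C,i}^m = \chi_{U,i}^m - \chi_{L,i}^m > 0$ by showing directly that the smaller of the two upper endpoints defining $\chi_{U,i}^m$ strictly exceeds the larger of the two lower endpoints defining $\chi_{L,i}^m$, separately for $i=1$ and $i=2$. Since $\chi_{U,i}^m$ is a minimum of two terms and $\chi_{L,i}^m$ is a maximum of two terms, the strict inequality ${T}_{C,i}^m>0$ is equivalent to the four pairwise strict inequalities obtained by subtracting each lower endpoint from each upper endpoint. I would therefore enumerate these four differences for $U_1$ using \eqref{XU1} and \eqref{XL1}, and then, exploiting the symmetric structure of \eqref{XU2} and \eqref{XL2}, repeat the argument for $U_2$.

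The first simplification is to substitute $T_S = T_D + 2\delta + T_Z$, which follows from \eqref{T_S} together with $T_R=T_F=\delta$. After this substitution each endpoint difference collapses to a threshold involving $T_Z$ or $T_D$ and $\delta$, plus a residual difference of two channel delays. For instance, the comparison of the two SI-shielding endpoints reduces to $T_Z - 2\delta - (\tau_{1,1}^{\max}-\tau_{1,1}^{\min})$, whereas a cross-term reduces to $T_D + (\tau_{1,1}^{\min}-\tau_{2,1}^{\min})$.

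The crucial ingredient is to control these residual delay differences using the \emph{global} delay spread. Because $\tau_{\max}$ is defined as the difference between the global maximum and global minimum taken over all channel indices, any difference of the form $\tau_a^{\max}-\tau_b^{\min}$ is at most $\tau_{\max}$, any difference $\tau_a^{\min}-\tau_b^{\min}$ or $\tau_a^{\max}-\tau_b^{\max}$ is at least $-\tau_{\max}$, and any same-channel spread $\tau_a^{\max}-\tau_a^{\min}$ is nonnegative. Applying these bounds, the four differences for $U_1$ are seen to be at least $T_Z-2\delta-\tau_{\max}$, $T_D-\tau_{\max}$, $T_Z-2\delta-\tau_{\max}$, and $T_D$, respectively; the differences for $U_2$ are controlled identically.

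Finally I would invoke the two standing hypotheses: the theorem's assumption \eqref{thm}, $T_Z>2\delta+\tau_{\max}$, makes the $T_Z$-governed differences strictly positive, while condition \eqref{T_D_Condition}, $T_D>\tau_{\max}$, makes the $T_D$-governed differences strictly positive. Combining all four strict inequalities yields $\chi_{U,i}^m>\chi_{L,i}^m$, and hence ${T}_{C,i}^m>0$, for both transceivers. The main obstacle is not any single estimate but the bookkeeping: one must correctly recognize that several endpoint comparisons are \emph{cross-channel} (e.g. mixing the SI channel delays with the desired-link delays) and therefore require the global definition of $\tau_{\max}$ rather than a per-channel spread, and one must remember to bring in condition \eqref{T_D_Condition} even though it is not restated in the theorem.
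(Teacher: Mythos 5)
Your proposal is correct and follows essentially the same route as the paper's Appendix~A: enumerate the four pairwise endpoint differences defining $\chi_{U,i}^m-\chi_{L,i}^m$, simplify via $T_S=T_D+T_Z+2\delta$, bound the residual delay differences by the global $\tau_{\max}$, and close the $T_D$-governed cases with \eqref{T_D_Condition} and the $T_Z$-governed cases with \eqref{thm}. No substantive difference from the published argument.
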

\begin{proof}
Please refer to Appendix A.
\end{proof}

\textit{Remarks on Theorem 1:} Theorem 1 shows that if \eqref{thm} is satisfied, the transmit signal design, that is, the data-interval of $U_1$ is before its zero-interval and the data-interval of $U_2$ is after its zero-interval, can provide a candidate interval with a length larger than zero for each desired OFDM symbol. Theorem 1 proves the existence of each candidate interval and the feasibility of ZIMS-VFD to simultaneously avoid the SI and sample the desired signal in data-interval.
Combining \eqref{T_Z_TD}, \eqref{T_Z_Condition}, \eqref{T_D_Condition}, and \eqref{thm}, we can obtain the condition that $T_Z$ and $T_D$ need to satisfy:
\begin{equation}\label{Condition_Total}
\begin{split}
\tau_{\max}+2\delta<T_Z<T_D-2\delta.
\end{split}
\end{equation}

\subsection{Sampling and Desired Symbols Restoration}
If \eqref{Condition_Total} holds, candidate intervals exist. In the $m$-th candidate interval of $U_i$, the SI is zero and the signal at the receive antenna of $U_i$ is
\begin{equation}\label{y_2k_flat_ofdm_ci}
\begin{split}
y_{i}(t)\!&=\!\!\!\!\!\!\!\!\sum_{n=-N+1}^{N}\!\!\!\!\!H_{3-i,i}^{n}X_{3-i}^{n,m}e^{j2\pi f_nt}+z_{i}(t),
\end{split}
\end{equation}
where $H_{i,k}^{n}$ denotes the frequency-domain channel gain with the expression of
\begin{equation}\label{H_ijpk_New}
\begin{split}
H_{i,k}^{n}\!=\!\sum_{l=1}^{L_{i,k}}a_{i,k}^le^{-j2\pi f_n\tau_{i,k}^l}.
\end{split}
\end{equation}
Then, ${y}_i(t)$ is sampled $G$ times in $\mathcal{T}_{C,i}^m$ ($G>2N$). We denote by $\hat t_{i}^{v,m}$ the $v$-th sampling time in $\mathcal{T}_{C,i}^m$.
The expression of $\hat t_{i}^{v,m}$ can be expressed as
\begin{equation}\label{t_i_new}
\begin{split}
\hat{t}_{i}^{v,m}=\chi_{L,i}^{m}+\frac{{T}_{C,i}v}{2N}.
\end{split}
\end{equation}
Based on the samples in each $\mathcal{T}_{C,i}^m$, we can obtain the receive signal in the digital domain.
The sampling sequence in $\mathcal{T}_{C,i}^m$ can be derived based on \eqref{y_2k_flat_ofdm_ci} and \eqref{H_ijpk_New} as follows:
\begin{equation}\label{Y_ik_new}
\begin{split}
\mathbf{Y}_{i,m}=\mathbf{{V}}_{i,m}\mathbf{{H}}_{3-i,i}\mathbf{X}_{3-i,m}+\mathbf{Z}_{i,m}.
\end{split}
\end{equation}
In \eqref{Y_ik_new},
\begin{equation}\label{Y_3-i}
\mathbf{Y}_{i,m}=\left[Y_{i}^{1,m},Y_{i}^{2,m},...,Y_{i}^{G,m}\right]^T,
\end{equation}
where the $v$-th entry is the value of the $v$-th sampling point,
\begin{equation}\label{X_3-i}
\mathbf{X}_{3-i,m}=\left[X_{3-i}^{-N+1,m},X_{3-i}^{-N+2,m},...,X_{3-i}^{N,m}\right]^T
\end{equation}
is the transmit symbol vector of $U_{3-i}$, where $\left(\cdot\right)^T$ denotes the transpose operation,
\begin{equation}\label{matrix_H_i}
\mathbf{{H}}_{3-i,i}\!=\!\text{diag}\left[{H}_{3-i,i}^{-N+1},{H}_{3-i,i}^{-N+2},...,{H}_{3-i,i}^{N}\right]
\end{equation}
\begin{figure*}[ht]
\begin{equation}\label{V_i}
\begin{split}
\mathbf{{V}}_{i,m}&=\left[\begin{array}{cccc}
e^{j2\pi f_{-N+1}\hat{t}_{i}^{1,m}} &e^{j2\pi f_{-N+2}\hat{t}_{i}^{1,m}}&  \cdots & e^{j2\pi f_{N}\hat{t}_{i}^{1,m}} \\
e^{j2\pi f_{-N+1}\hat{t}_{i}^{2,m}} &e^{j2\pi f_{-N+2}\hat{t}_{i}^{2,m}}&  \cdots & e^{j2\pi f_{N}\hat{t}_{i}^{2,m}} \\
\vdots  &\vdots& \ddots & \vdots \\
e^{j2\pi f_{-N+1}\hat{t}_{i}^{G,m}} &e^{j2\pi f_{-N+2}\hat{t}_{i}^{G,m}}&  \cdots & e^{j2\pi f_{N}\hat{t}_{i}^{G,m}}
\end{array}\right]
\end{split}
\end{equation}
\hrulefill
\end{figure*}
\begin{figure*}[ht]
\begin{equation}\label{C_S}
\begin{split}
C_{S}=\frac{\Delta f}{(1+\alpha)M}\sum_{m=1}^{M}\left[\log _2 \det\left(\mathbf{I}_G+\frac{\mathbf{{V}}_{1,m}\mathbf{{H}}_{2,1}\mathbf{R}_{2,m}\mathbf{{H}}_{2,1}^H\mathbf{{V}}_{1,m}^H}{\sigma_0^2}\right)+\log _2 \det\left(\mathbf{I}_G+\frac{\mathbf{{V}}_{2,m}\mathbf{{H}}_{1,2}\mathbf{R}_{1,m}\mathbf{{H}}_{1,2}^H\mathbf{{V}}_{2,m}^H}{\sigma_0^2}\right)\right]
\end{split}
\end{equation}
\hrulefill
\end{figure*}is the channel matrix from $U_{3-i}$ to $U_i$,
\begin{equation}\label{Z_i}
\mathbf{Z}_{i,m}\!\!=\!\!\left[\bar{z}_{i}\left(\hat t_{i}^{1,m}\right),\bar{z}_{i}\left(\hat t_{i}^{2,m}\right),...,\bar{z}_{i}\left(\hat t_{i}^{G,m}\right)\right]^T
\end{equation}
is the noise vector, where each entry is assumed to follow $\mathcal C \mathcal N(0,\sigma_0^2)$ with $\sigma_0^2$ representing the variance of the noise, $\mathbf{ V}_{i,m}\in\mathbb{C}^{G\times2N}$ is related to the sampling times and subcarrier frequencies, which is given by \eqref{V_i} at the top of this page. In this paper we call $\mathbf{ V}_{i,m}$ sampling matrix. It can be observed from \eqref{Y_ik_new} that the equivalent channel matrix is $\mathbf{{V}}_{i,m}\mathbf{{H}}_{3-i,i}$. We have the following theorem for $\mathbf{{V}}_{i,m}\mathbf{{H}}_{3-i,i}$.
\begin{theorem}
$\text{Rank}(\mathbf{{V}}_{i,m}\mathbf{{H}}_{3-i,i})=2N$, where $\text{Rank}(\cdot)$ is the operation of taking the rank.
\end{theorem}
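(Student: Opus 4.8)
The plan is to expose the Vandermonde structure hidden in the sampling matrix $\mathbf{V}_{i,m}$ and then reduce the rank of the product to the ranks of its factors. First I would strip the carrier phase from each row: writing $f_n=f_c+n\Delta f$, the $(v,n)$ entry of $\mathbf{V}_{i,m}$ in \eqref{V_i} equals $e^{j2\pi f_c\hat t_i^{v,m}}\,z_v^{\,n}$ with $z_v=e^{j2\pi\Delta f\hat t_i^{v,m}}$, so that $\mathbf{V}_{i,m}=\mathbf{D}\,\mathbf{W}$, where $\mathbf{D}=\text{diag}(e^{j2\pi f_c\hat t_i^{1,m}},\dots,e^{j2\pi f_c\hat t_i^{G,m}})$ is an invertible $G\times G$ matrix and $\mathbf{W}$ has entries $z_v^{\,n}$ with consecutive exponents $n\in\{-N+1,\dots,N\}$. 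Since $\mathbf{D}$ is invertible, $\text{Rank}(\mathbf{V}_{i,m})=\text{Rank}(\mathbf{W})$, and $\mathbf{W}$ is a column-shifted Vandermonde matrix in the nodes $z_1,\dots,z_G$.

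The core step is to verify that these nodes contain at least $2N$ distinct values. Using the uniform sampling rule $\hat t_i^{v,m}=\chi_{L,i}^{m}+\frac{T_{C,i}^{m}v}{2N}$ from \eqref{t_i_new}, the nodes form a geometric progression $z_v=c\,\omega^{v}$ with $c=e^{j2\pi\Delta f\chi_{L,i}^{m}}$ and $\omega=e^{j2\pi\Delta f T_{C,i}^{m}/(2N)}$. Keeping the first $2N$ rows, the nodes $z_1,\dots,z_{2N}$ are pairwise distinct iff $\omega^{k}\neq1$ for $k=1,\dots,2N-1$, i.e.\ iff $\Delta f\,T_{C,i}^{m}\,k/(2N)\notin\mathbb{Z}$. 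Here I would invoke a length bound on the candidate interval: inspecting the $\min/\max$ expressions \eqref{XU1}--\eqref{XL2} gives $T_{C,i}^{m}\le T_Z-2\delta$, and combining this with \eqref{Condition_Total} and $\Delta f=1/T_D$ yields $0<\Delta f\,T_{C,i}^{m}<1$ (with strict positivity supplied by Theorem 1). Consequently each quantity $\Delta f\,T_{C,i}^{m}\,k/(2N)$ lies strictly in $(0,1)$ for $k=1,\dots,2N-1$, so no $\omega^{k}$ can equal one and the $2N$ nodes are distinct.

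Given $2N$ distinct, nonzero nodes, the corresponding $2N\times2N$ submatrix of $\mathbf{W}$ is nonsingular, since after factoring $z_v^{-N+1}$ from each row its determinant is $\prod_v z_v^{-N+1}\prod_{a<b}(z_b-z_a)\neq0$. Hence $\text{Rank}(\mathbf{W})=\text{Rank}(\mathbf{V}_{i,m})=2N$, i.e.\ $\mathbf{V}_{i,m}$ has full column rank. Finally, $\mathbf{H}_{3-i,i}$ in \eqref{matrix_H_i} is a $2N\times2N$ diagonal matrix whose nonzero diagonal entries $H_{3-i,i}^{n}$ make it invertible, so right-multiplication by it preserves rank; equivalently, Sylvester's inequality gives $\text{Rank}(\mathbf{V}_{i,m}\mathbf{H}_{3-i,i})\ge\text{Rank}(\mathbf{V}_{i,m})+\text{Rank}(\mathbf{H}_{3-i,i})-2N=2N$, while the $G\times2N$ product has rank at most $2N$. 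Therefore $\text{Rank}(\mathbf{V}_{i,m}\mathbf{H}_{3-i,i})=2N$.

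The main obstacle is the distinctness of the Vandermonde nodes: everything hinges on the bound $\Delta f\,T_{C,i}^{m}<1$, which prevents the equally spaced sampling phases from aliasing onto a common point of the unit circle. I expect the bookkeeping needed to extract $T_{C,i}^{m}\le T_Z-2\delta$ from \eqref{XU1}--\eqref{XL2}, together with the careful chaining of \eqref{Condition_Total}, to be the only genuinely delicate part; the Vandermonde nonsingularity and the rank-of-product argument are then routine.
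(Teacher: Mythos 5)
Your proposal is correct and follows essentially the same route as the paper's Appendix B: factor $\mathbf{V}_{i,m}$ into an invertible diagonal phase matrix times a Vandermonde-type matrix in the nodes $e^{j2\pi\Delta f\hat t_i^{v,m}}$, conclude $\text{Rank}(\mathbf{V}_{i,m})=2N$, and then observe that right-multiplication by the diagonal channel matrix $\mathbf{H}_{3-i,i}$ preserves the rank. The one place you go beyond the paper is worth noting: the paper simply asserts that the Vandermonde factor has rank $2N$, whereas you actually verify the node-distinctness this requires, by showing $0<\Delta f\,T_{C,i}^{m}<1$ (via $T_{C,i}^{m}\le T_Z-2\delta$ and \eqref{Condition_Total}) so the equally spaced phases cannot alias onto a common point of the unit circle; this closes a genuine, if small, gap in the published argument (modulo the standard implicit assumption that no $H_{3-i,i}^{n}$ vanishes).
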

\begin{proof}
Please refer to Appendix B.
\end{proof}

\textit{Remarks on Theorem 2:} Theorem 2 implies that $\mathbf{{V}}_{i,m}\mathbf{{H}}_{3-i,i}$ is full-rank and can be transformed into $2N$ parallel subchannels. Thus, sampling in candidate intervals can achieve the same freedom degrees as conventional OFDM, where the sampling points are uniformly distributed in the entire OFDM symbol period.

In ZIMS-VFD, the sampling points are not uniformly distributed in the entire OFDM symbol period. Thus, it is difficult to directly restore $\mathbf{X}_{3-i,m}$ with discrete Fourier transform (DFT). We can consider $\mathbf{Y}_{i,m}$ as an output signal of a multiple input multiple output (MIMO) system. Based on \eqref{Y_ik_new}, the sum capacity of SISO ZIMS-VFD can be given by \eqref{C_S}. In \eqref{C_S}, $\mathbf{I}_G$ is a $G\times G$ identity matrix, $\mathbf{R}_{3-i,m}=\mathbb{E}\{\mathbf{X}_{3-i,m}\mathbf{X}_{3-i,m}^H\}$, where $\mathbb{E}\left\{\cdot\right\}$ represnts the operation of taking expectation and the $n$-th main diagonal element is transmit power of the $n$-th subcarrier, and $\alpha=(T_Z+2\delta)/T_D$.

Singular value decomposition (SVD) based pre-coding and pre-decoding can be used to restore the desired symbols and achieve the capacity given by \eqref{C_S}~\cite{Fundamentals_of_Wireless_Communication}. The singular value decomposition (SVD) of $\mathbf{V}_{i,m}\mathbf{H}_{3-i,i}$ can be given by
\begin{equation}\label{SVD}
\begin{split}
\mathbf{ V}_{i,m}\mathbf{ H}_{3-i,i}=\mathbf{\tilde U}_{i,m}\mathbf{S}_{i,m}\mathbf{\hat U}_{i,m}^H,
\end{split}
\end{equation}
where $\mathbf{\tilde U}_{i,m}\in\mathbb{C}^{G \times G}$ and $\mathbf{\hat U}_{i,m}\in\mathbb{C}^{2N\times2N}$ are two unitary matrices corresponding to $\mathbf{ V}_{i,m}\mathbf{H}_{3-i,i}$, $(\cdot)^H$ denotes the conjugate transpose operation, and $\mathbf{S}_{i,m}\in\mathbb{C}^{G\times2N}$ is the matrix with main diagonal entries equaling to the singular values of $\mathbf{V}_{i,m}\mathbf{H}_{3-i,i}$ and other entries equaling to zero.
With the pre-coding of $\mathbf{X}_{3-i,m}=\mathbf{\hat U}_{i,m}\mathbf{\tilde X}_{3-i,m}$
and the pre-decoding of $\mathbf{\tilde Y}_{i,m}=\mathbf{\tilde U}_{i,m}^H\mathbf{Y}_{i,m}$,
\eqref{Y_ik_new} can be converted into multiple parallel SISO transmissions:
\begin{equation}\label{tilde_Y_i}
\begin{split}
\tilde Y_{i}^{\tilde k,m}\!=\!\sqrt{\lambda_{i}^{\tilde k,m}}\tilde X_{3-i}^{\tilde k,m}\!+\!\tilde Z_{i}^{\tilde k,m},~\tilde k=1,2,...,2N.
\end{split}
\end{equation}
In \eqref{tilde_Y_i}, $\tilde Y_{i}^{\tilde k,m}$ is the $\tilde k$-th entry of $\mathbf{\tilde Y}_{i,m}$, $\sqrt{\lambda_{i}^{\tilde k,m}}$ is the $\tilde k$-th singular value of $\mathbf{V}_{i,m}\mathbf{H}_{3-i,i}$, $\tilde X_{3-i}^{\tilde k,m}$ is the $\tilde k$-th entry of $\mathbf{\tilde X}_{3-i,m}$, whose average power can be given by $\mathbb{E}\{|\tilde X_{3-i}^{\tilde k,m}|^2\}=P_{3-i}^{\tilde k}$, and $\tilde Z_{i,k}$ is the $\tilde k$-th entry of $\mathbf{\tilde U}_{i,m}^H\mathbf{Z}_{i,m}$. Also, $P_{i}^{\tilde k}$ satisfies $\sum_{\tilde k=1}^{2N}P_{i}^{\tilde k}\leq \bar{P}_{i}$, where $\bar{P}_{i}$ denotes the power constraint for $U_i$.
Then, the estimation for $\tilde{X}_{3-i}^{\tilde k,m}$, denoted by $\hat{X}_{3-i}^{\tilde k,m}$, can be expressed as follows:
\begin{equation}\label{tilde_X_i_Estimate}
\begin{split}
\bar X_{3-i}^{\tilde k,m}=\underset{{X\in\mathbb{X}_{3-i}}}{\arg\min}\left|\frac{\tilde Y_{i}^{\tilde k,m}}{\sqrt{\lambda_{i}^{\tilde k,m}}}- X\right|^2,
\end{split}
\end{equation}
where $\mathbb{X}_{3-i}$ is the constellation that defines the transmit symbols of $U_{3-i}$. MIMO detection methods such as zero-forcing (ZF) detection, minimum mean square error (MMSE) detection, and MMSE with successive interference cancellation (SIC) can also be used to restore desired symbols~\cite{Fundamentals_of_Wireless_Communication,MIMO_OFDM}. We omit them for brevity here.
Based on \eqref{tilde_Y_i}, the signal to noise ratio (SNR) corresponding to the $k$-th parallel channel of $U_i$, denoted by $\gamma_{i}^{\tilde k,m}$, can be given by
\begin{equation}\label{SINR_n}
\begin{split}
\gamma^{\tilde k,m}_i&\!=\!\frac{P_{3-i}^{\tilde k}\lambda_{i}^{\tilde k,m}}{\sigma_0^2}.
\end{split}
\end{equation}
If quadrature phase shift keying (QPSK) is used, the bit error rate (BER) corresponding to the $\tilde k$-th parallel channel of $U_i$, denoted by $p_{b,i}^{\tilde k,m}$, can be given by~\cite{Wireless_Communications}
\begin{equation}\label{p_b}
\begin{split}
p_{b,i}^{\tilde k,m}=Q\left(\sqrt{\gamma^{\tilde k,m}_i}\right).
\end{split}
\end{equation}
\begin{figure*}[ht]
\begin{equation}\label{C_S_MIMO}
\begin{split}
C_{S}^{\rm MIMO}\!\!=\!\!\frac{\Delta f}{(1\!+\!\alpha)M}\sum_{m=1}^{M}\left[\log _2 \!\det\!\left(\!\mathbf{I}_{K_{R,1}G}\!\!+\!\!\frac{\mathbf{V}_{1,m}^{\rm MIMO}\mathbf{ R}_{2,m}^{\rm MIMO}(\mathbf{V}_{1,m}^{\rm MIMO})^H}{\sigma_0^2}\right)\!\!+\!\!\log _2 \!\det\!\left(\!\mathbf{I}_{K_{R,2}G}\!\!+\!\!\frac{\mathbf{V}_{2,m}^{\rm MIMO}\mathbf{ R}_{1,m}^{\rm MIMO}(\mathbf{V}_{2,m}^{\rm MIMO})^H}{\sigma_0^2}\right)\right]
\end{split}
\end{equation}
\hrulefill
\end{figure*}

The value of $\alpha$, which depends on the value of $T_Z$, has significant impacts on the singular values of the equivalent channel. If $\alpha$ is small and the resulting ${T}_{C,i}$ is short, we can obtain a high symbol rate, which is $\Delta f/(1+\alpha)$ in \eqref{C_S}. However, in this case, some values of $\lambda_{i}^{1,m},...,\lambda_{i}^{2N,m}$ are small, that is, some subchannels are weak, since the samples are close, which results in the low capacity and high BER for the corresponding weak subchannels. On the other hand, if $T_Z$ is large and the resulting ${T}_{C,i}$ is long, the values of $\lambda_{i}^{1,m},...,\lambda_{i}^{2N,m}$ are uniform, which results in fewer weak subchannels. However, the large value of $\alpha$ results in a relatively low symbol rate. Thus, $\alpha$ not only affects the symbol rate but also the equivalent channel. The optimal $\alpha$ can be numerically found with one-dimensional search methods such as golden section and Fibonacci search~\cite{Intro_Optimization}.

\section{MIMO ZIMS-VFD}
In this section, we discuss how to extend the SISO ZIMS-VFD to MIMO communications.

If $\tau_{i, j}^{\max}$ and $\tau_{i, j}^{\min}$ in Section III are replaced with $\tau_{i,k}^{\max}\!=\!\max_{l,p,q}\{\tau_{i,k,p,q}^l\}$ and $\tau_{i,k}^{\min}\!=\!\min_{l,p,q}\{\tau_{i,k,p,q}^l\}$, respectively, \eqref{T_F1}-\eqref{Condition_Total} also hold for MIMO ZIMS-VFD. Therefore, candidate intervals determination, receiver switch control, and sampling are similar to those in SISO ZIMS-VFD. Next we analyze the symbol restoration for MIMO ZIMS-VFD. We denote by $\mathbf{X}_{i,p}^{m}\in\mathbb{C}^{2N\times1}$ the transmit symbol vector corresponding to the $p$-th transmit antenna of $U_i$ in the $m$-th OFDM symbol period, $\mathbf{Y}_{i,q}^{m}\in\mathbb{C}^{G\times1}$ the sampling sequence corresponding to the $q$-th receive antenna of $U_i$ in the $m$-th candidate interval, $\mathbf{H}_{i,k}^{p,q}\in\mathbb{C}^{2N\times2N}$ the frequency-domain channel matrix from the $p$-th transmit antenna of $U_{i}$ to the $q$-th receive antenna of $U_k$. Then, the receive signal vector of $U_i$, denoted by $\mathbf{Y}_{i,m}^{\rm MIMO}\in\mathbb{C}^{K_{R,i}G\times1}$, can be given by
\begin{equation}\label{Y_ik_new_MA}
\begin{split}
\mathbf{Y}_{i,m}^{\rm MIMO}=\mathbf{V}_{i,m}^{\rm MIMO}\mathbf{ X}_{3-i,m}^{\rm MIMO}+\mathbf{ Z}_{i,m}^{\rm MIMO}.
\end{split}
\end{equation}
In \eqref{Y_ik_new_MA}, $\mathbf{Y}_{i,m}^{\rm MIMO}$ is given by $\mathbf{Y}_{i,m}^{\rm MIMO}=\left[\mathbf{Y}_{i,1}^{m},...,\mathbf{Y}_{i,K_{R,i}}^{m}\right]^T$, $\mathbf{ X}_{3-i,m}^{\rm MIMO}\in\mathbb{C}^{2K_{T,3-i}N\times1}$ is the transmit symbol vector of $U_{3-i}$ and given by $\mathbf{ X}_{3-i,m}^{\rm MIMO}=\big[\mathbf{X}_{3-i,1}^{m},...,\mathbf{X}_{3-i,K_{T,3-i}}^{m}\big]^T$, $\mathbf{V}_{i,m}^{\rm MIMO}\in\mathbb{C}^{K_{R,i}G\times2K_{T,3-i}N}$ is the equivalent channel matrix given by
\begin{equation}\label{V_i_MA}
\mathbf{V}_{i,m}^{\rm MIMO}=\left[\begin{array}{cccc}
\mathbf{V}_{i,m}\mathbf{H}_{3-i,i}^{1,1} & \cdots &\mathbf{V}_{i,m}\mathbf{H}_{3-i,i}^{K_{T,3-i},1} \\
\vdots  & \ddots & \vdots \\
\mathbf{V}_{i,m}\mathbf{H}_{3-i,i}^{1,K_{R,i}} & \cdots & \mathbf{V}_{i,m}\mathbf{H}_{3-i,i}^{K_{T,3-i},K_{R,i}}
\end{array}\right],
\end{equation}
where $\mathbf{V}_{i,m}\mathbf{H}_{3-i,i}^{p,q}$ is the equivalent channel matrix from the $p$-th transmit antenna of $U_{3-i}$ to the $q$-th receive antenna of $U_i$. We have the following theorem for $\mathbf{V}_{i,m}^{\rm MIMO}$.
\begin{theorem}
$\text{Rank}(\mathbf{V}_{i,m}^{\rm MIMO})=\text{Rank}(\mathbf{H}_{3-i,i}^{\rm MIMO})$,
where
\begin{equation}\label{H_i_MA}
\mathbf{H}_{3-i,i}^{\rm MIMO}=\left[\begin{array}{cccc}
\mathbf{H}_{3-i,i}^{1,1} & \cdots &\mathbf{H}_{3-i,i}^{K_{T,3-i},1} \\
\vdots  & \ddots & \vdots \\
\mathbf{H}_{3-i,i}^{1,K_{R,i}} & \cdots &\mathbf{H}_{3-i,i}^{K_{T,3-i},K_{R,i}}
\end{array}\right].
\end{equation}
\end{theorem}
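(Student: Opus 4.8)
The plan is to exhibit $\mathbf{V}_{i,m}^{\rm MIMO}$ as a product of a full-column-rank matrix and $\mathbf{H}_{3-i,i}^{\rm MIMO}$, so that the rank of the product is governed entirely by $\mathbf{H}_{3-i,i}^{\rm MIMO}$. First I would observe from \eqref{V_i_MA} and \eqref{H_i_MA} that every block of $\mathbf{V}_{i,m}^{\rm MIMO}$ shares the common left factor $\mathbf{V}_{i,m}$, which lets me write
\begin{equation}
\mathbf{V}_{i,m}^{\rm MIMO}=\big(\mathbf{I}_{K_{R,i}}\otimes\mathbf{V}_{i,m}\big)\,\mathbf{H}_{3-i,i}^{\rm MIMO},
\end{equation}
that is, a block-diagonal matrix carrying $K_{R,i}$ copies of $\mathbf{V}_{i,m}$ on its diagonal, multiplied on the right by the stacked channel matrix $\mathbf{H}_{3-i,i}^{\rm MIMO}$. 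A direct block-by-block check confirms that the $(q,p)$ block of the right-hand side is $\mathbf{V}_{i,m}\mathbf{H}_{3-i,i}^{p,q}$, which matches the $(q,p)$ block of \eqref{V_i_MA}.

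Next I would establish that the left factor $\mathbf{I}_{K_{R,i}}\otimes\mathbf{V}_{i,m}$ has full column rank. From Theorem 2, $\text{Rank}(\mathbf{V}_{i,m}\mathbf{H}_{3-i,i})=2N$; since $\mathbf{H}_{3-i,i}$ is a $2N\times2N$ diagonal matrix, this simultaneously forces $\mathbf{H}_{3-i,i}$ to be nonsingular and $\mathbf{V}_{i,m}\in\mathbb{C}^{G\times2N}$ (recall $G>2N$) to have full column rank $2N$. A block-diagonal matrix whose diagonal blocks each have full column rank likewise has full column rank, so $\mathbf{I}_{K_{R,i}}\otimes\mathbf{V}_{i,m}$ has column rank $2K_{R,i}N$, equal to its number of columns.

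Finally I would invoke the standard fact that left-multiplication by a full-column-rank (injective) matrix preserves rank: if $A$ has trivial null space, then $\ker(AB)=\ker B$, and rank-nullity gives $\text{Rank}(AB)=\text{Rank}(B)$. Applying this with $A=\mathbf{I}_{K_{R,i}}\otimes\mathbf{V}_{i,m}$ and $B=\mathbf{H}_{3-i,i}^{\rm MIMO}$ yields $\text{Rank}(\mathbf{V}_{i,m}^{\rm MIMO})=\text{Rank}(\mathbf{H}_{3-i,i}^{\rm MIMO})$, as claimed. The only nonroutine step is spotting the factorization in the first paragraph; everything after it is elementary linear algebra, so I expect the main obstacle to be recognizing that the shared left factor $\mathbf{V}_{i,m}$ can be pulled out as a block-diagonal (Kronecker) matrix, after which the rank-preservation argument closes the proof.
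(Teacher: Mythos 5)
Your proposal is correct and follows essentially the same route as the paper's Appendix C: the paper writes $\mathbf{V}_{i,m}^{\rm MIMO}$ as the product of the block-diagonal matrix $\mathrm{diag}(\mathbf{V}_{i,m},\ldots,\mathbf{V}_{i,m})$ (your $\mathbf{I}_{K_{R,i}}\otimes\mathbf{V}_{i,m}$) with $\mathbf{H}_{3-i,i}^{\rm MIMO}$ and then concludes equality of ranks. The only cosmetic difference is that the paper justifies the final step by combining the product-rank upper bound with Sylvester's inequality, whereas you use the equivalent kernel/injectivity argument for left multiplication by a full-column-rank matrix.
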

\begin{proof}
Please refer to Appendix C.
\end{proof}

\textit{Remarks on Theorem 3:} Theorem 3 shows that $\mathbf{{V}}_{i,m}$ has no impact on the rank of $\mathbf{V}_{i,m}^{\rm MIMO}$, which implies that the sampling in candidate intervals has no impact on the rank of equivalent MIMO channel, and can achieve the same freedom degrees as conventional MIMO-OFDM communications.

Based on \eqref{Y_ik_new_MA}, the sum capacity of MIMO ZIMS-VFD can be given by \eqref{C_S_MIMO}. In \eqref{C_S_MIMO}, $\mathbf{I}_{K_{R,i}G}$ is a $K_{R,i}G\times K_{R,i}G$ identity matrix, $\mathbf{R}_{3-i,m}^{\rm MIMO}=\mathbb{E}\{\mathbf{X}_{3-i,m}^{\rm MIMO}(\mathbf{X}_{3-i,m}^{\rm MIMO})^H\}$. Then, similar to single-antenna case, based on the SVD of $\mathbf{V}_{i,m}^{\rm MIMO}$, \eqref{C_S_MIMO} can be achieved and the desired symbols can be restored. MIMO detection methods such as ZF detection, MMSE detection, and MMSE-SIC detection can also be used to restore the desired symbols~\cite{Fundamentals_of_Wireless_Communication,MIMO_OFDM}.
\section{Numerical Evaluations}
In this section, numerical evaluations are conducted to show the performance of ZIMS-VFD. The general parameters in evaluations are as follows. The maximum delay spread is set to $\tau_{\max}=100$~ns, the upper-bound of transitions is set to $\delta=1.9$~$\mu$s~\cite{RF_Switch,RF_Switch_1}, the central frequency is set to $f_c=2.4$~GHz, the bandwidth is set to $B=20$~MHz, the channel gains of SI channels follow $H_{i,i}^{p,q}\sim\mathcal C \mathcal N(0,1)$, the channel gains of desired channels follow $H_{3-i,i}^{p,q}\sim\mathcal C \mathcal N(0,10^{-10})$, the noise power is set to $\sigma_0^2=N_0B$ where $N_0$ is set to $-150$~dBm/Hz, the number of samples in each candidate interval is $2N$.
\begin{figure}
\centering
\includegraphics[width=0.5\textwidth]{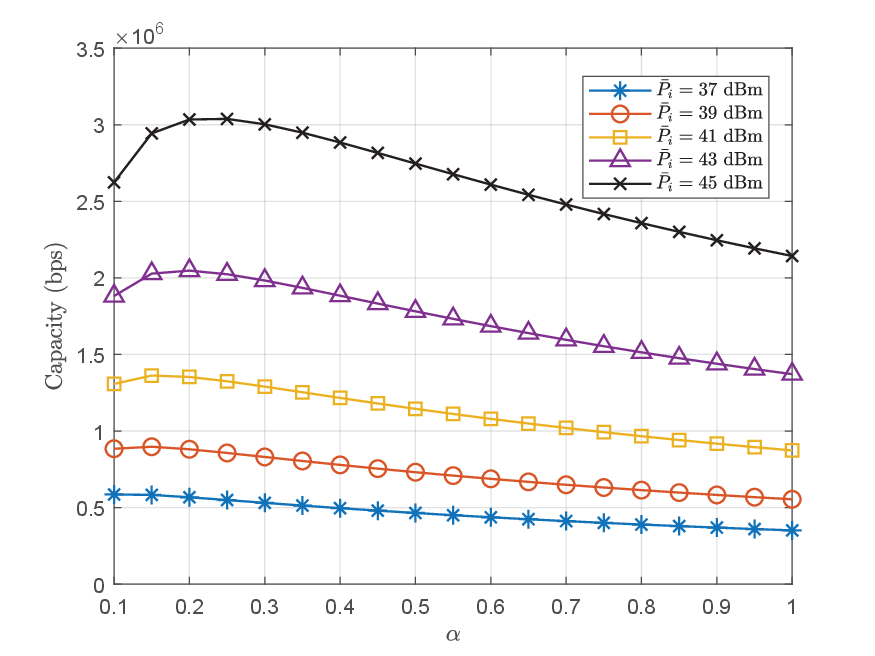}
\caption{The capacity of SISO ZIMS-VFD with different values of $\alpha$ and $\bar {P}_i$.}
 \label{fig:sum_rate}
\end{figure}

Figure~\ref{fig:sum_rate} shows the capacity of SISO ZIMS-VFD with different values of $\alpha$ and $\bar {P}_i$, where the number of subcarriers is set to $2N=2048$ and the power constraint for each transceiver is set to $\bar P_i=37$~dBm, $\bar P_i=39$~dBm, $\bar P_i=41$~dBm, $\bar P_i=43$~dBm, and $\bar P_i=45$~dBm. The transmit power of each transceiver is averagely allocated to subcarriers and the value of $\alpha$ ranges from 0.1 to 1. It can be known from Fig.~\ref{fig:sum_rate} that if the power constraint is relatively small, for example, $\bar P_i=37$~dBm, the capacity of SISO ZIMS-VFD monotonically decreases as $\alpha$ increases. If the power constraint is relatively large, for example, $\bar P_i=41$~dBm, the capacity of SISO ZIMS-VFD doesn't monotonically varies as $\alpha$ increases. Therefore, in low power region, the value of $\alpha$ needs to be chosen as small as possible. In high power region, the minimum $\alpha$ may not be optimal. The optimal $\alpha$ can be numerically found with one-dimensional search methods such as golden section search and Fibonacci search~\cite{Intro_Optimization}.
\begin{figure}\label{fig:sum_rate_gain}
\centering
\includegraphics[width=0.5\textwidth]{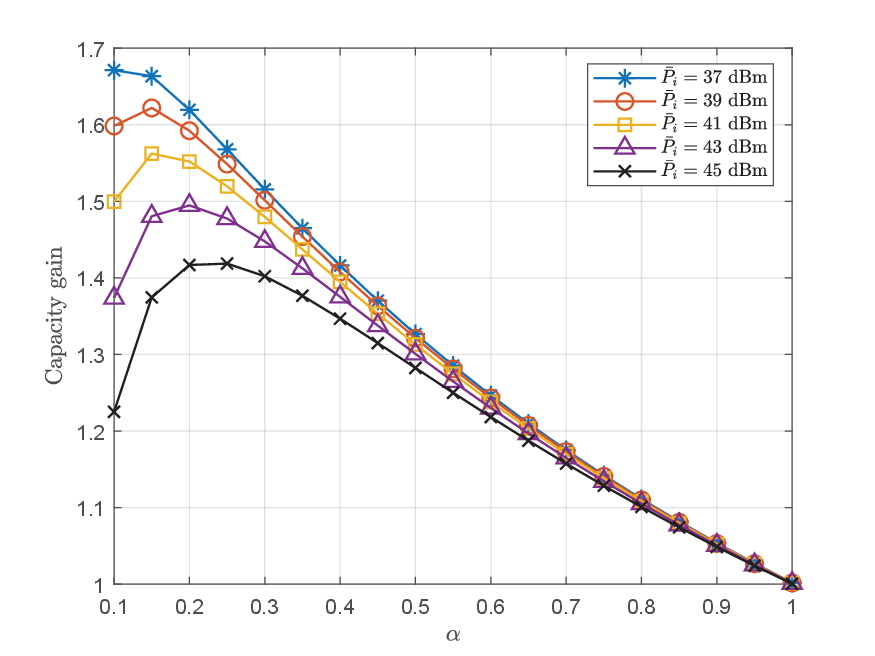}
\caption{The capacity gain of SISO ZIMS-VFD over SISO OFDM-HD with different values of $\alpha$ and $\bar {P}_i$.} \label{fig:sum_rate_gain}
\end{figure}

Figure~\ref{fig:sum_rate_gain} shows the capacity gain of SISO ZIMS-VFD over conventional SISO OFDM-HD, which refers to the ratio of the capacity of SISO ZIMS-VFD to the capacity of conventional SISO OFDM-HD, with different values of $\alpha$ and $\bar {P}_i$. The CP length of SISO OFDM-HD is set to $2\tau_{\max}$. The number of subcarriers, power constraint for each transceiver, power allocation and the range of $\alpha$ are the same as Fig.~\ref{fig:sum_rate}. We can observe from Fig.~\ref{fig:sum_rate_gain} that the capacity gain is larger than 1, which implies that SISO ZIMS-VFD outperforms conventional SISO OFDM-HD in capacity. Also, the capacity gain decreases as $\bar{P}_{i}$ increases, which implies that SISO ZIMS-VFD has the better performance gain in low-power region.
In addition, the optimal $\alpha$ to maximize the capacity gain monotonically increases as $\bar{P}_{i}$ increases. Moreover, we can see that the capacity gain is approximately equal to 1 if $\alpha=1$, that is, $T_Z=T_D-2\delta$. This is because the forward transmission and the backward transmission of ZIMS-VFD are completely separated due to the relatively long zero-interval.
\begin{figure}\label{fig:OFDM_Capacity_N}
\centering
\includegraphics[width=0.5\textwidth]{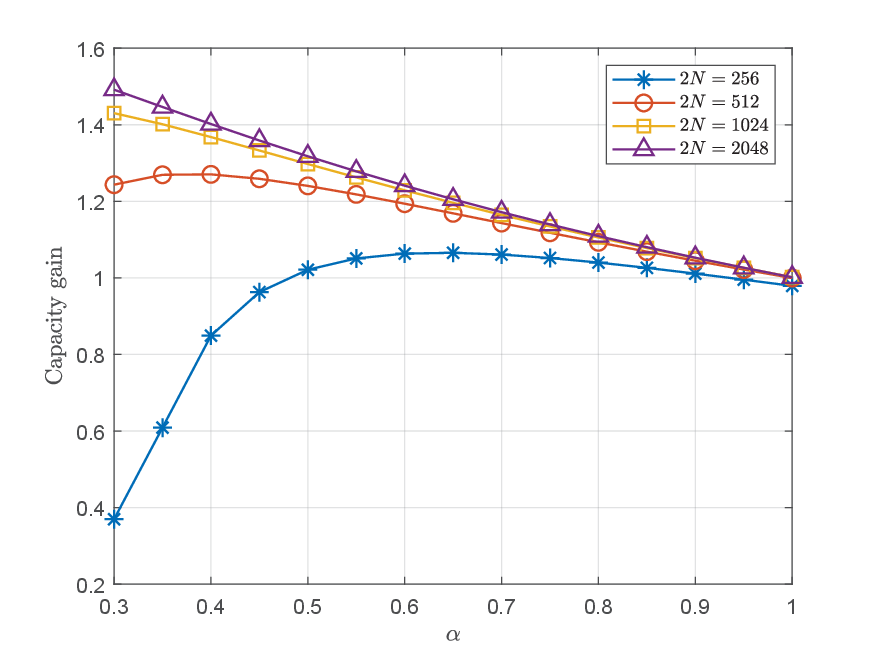}
\caption{The capacity gain of SISO ZIMS-VFD over conventional SISO OFDM-HD with different values of $\alpha$ and $2N$.} \label{fig:OFDM_Capacity_N}
\end{figure}

Figure~\ref{fig:OFDM_Capacity_N} shows the capacity gain of SISO ZIMS-VFD over SISO OFDM-HD with different values of $\alpha$ and $2N$. The CP length of SISO OFDM-HD is set to $2\tau_{\max}$. The number of subcarriers is set to $2N=256$, $2N=512$, $2N=1024$, and $2N=2048$. The power constraint for each transceiver is 40~dBm and the power is averagely allocated for subcarriers. We set the range of $\alpha$ is from 0.3 to 1. It can be known from Fig.~\ref{fig:OFDM_Capacity_N} that the capacity gain monotonically increases as the value of $2N$ increases, which implies that SISO ZIMS-VFD is suitable for OFDM transmission with a large number of subcarriers. Also, we can observe that if the subcarrier number is relatively large, for example, $2N=2048$, the capacity of SISO ZIMS-VFD monotonically decreases as $\alpha$ increases. If the subcarrier number is relatively small, for example, $2N=512$, the capacity of SISO ZIMS-VFD doesn't monotonically varies as $2N$ increases. Therefore, if the number of subcarriers is large, the value of $\alpha$ needs to be chosen as small as possible. Otherwise, the minimum $\alpha$ may not be optimal and the optimal $\alpha$ can be numerically found with one-dimensional search methods. Moreover, similar to Fig.~\ref{fig:sum_rate}, we can see that the capacity in Fig.~\ref{fig:OFDM_Capacity_N} is approximately equal to 1 due to the relatively long zero-interval.
\begin{figure}\label{fig:OFDM_Capacity_MIMO}
\centering
\includegraphics[width=0.5\textwidth]{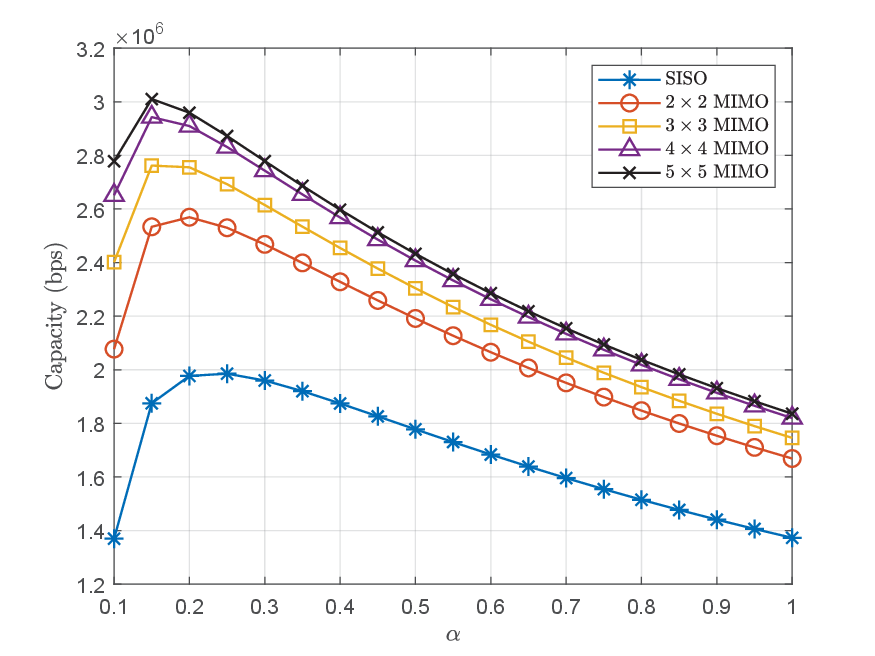}
\caption{The capacity ZIMS-VFD with different numbers of transmit antennas and receive antennas.} \label{fig:OFDM_Capacity_MIMO}
\end{figure}

Figure~\ref{fig:OFDM_Capacity_MIMO} compares the capacities of SISO ZIMS-VFD, $2\times2$ MIMO ZIMS-VFD, $3\times3$ MIMO ZIMS-VFD, $4\times4$ MIMO ZIMS-VFD, and $5\times5$ MIMO ZIMS-VFD. The number of subcarriers is set to $2N=1024$ and the power constraint for each transceiver is 40~dBm, which is averagely allocated for subcarriers. We set the range of $\alpha$ is from 0.1 to 1. It can be known from Fig.~\ref{fig:OFDM_Capacity_MIMO} that the capacity monotonically increases as the numbers of transmit antennas and receive antennas increase. This is because of the increase in parallel subchannels. However, we can observe that if the subcarrier number is relatively large, the increase in the number of antennas has limited impact on the capacity. In this case, increasing transmit power is more effective than increasing the number of antennas.
\begin{figure}
\centering
\includegraphics[width=0.5\textwidth]{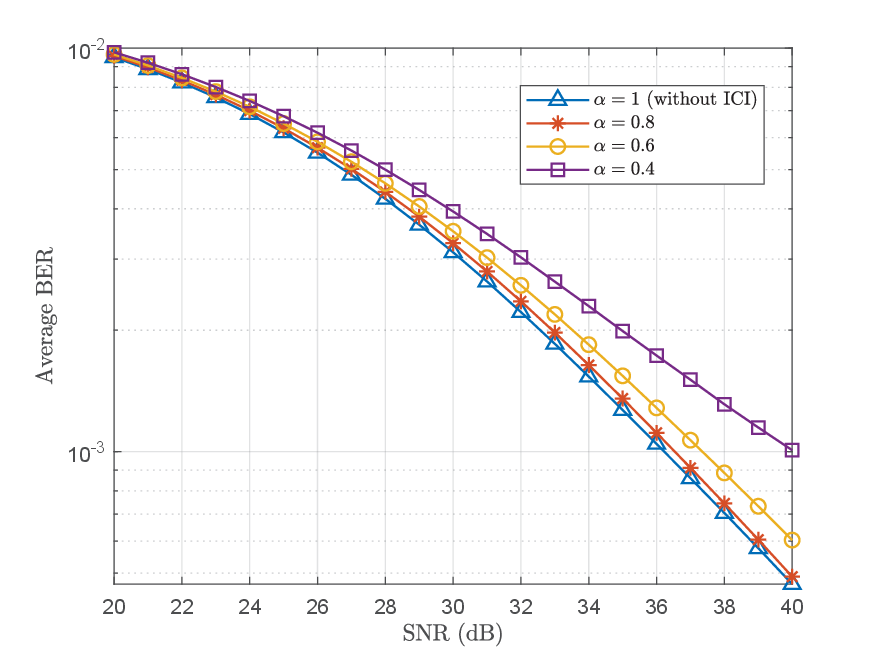}
\caption{Average BER of ZIMS-VFD.}
 \label{fig:Error_Power}
\end{figure}

Figure~\ref{fig:Error_Power} shows the average BER of ZIMS-VFD, where $2N=1024$, $K_{T,i}=K_{R,i}=2$, the modulation scheme is
QPSK, the values of $\alpha$ are 1, 0.8, 0.6, 0.4, and the SNR, whose expression is $\bar{P}_i/(N_0B)$, ranges from $20$~dB to $40$~dB. Also, MMSE detection is used for ZIMS-VFD to restore the desired symbols. If $\alpha=1$, there are no inter-carrier interferences (ICI) since the samples are uniformly distributed in the entire OFDM period and the desired symbols are restored with DFT. We can observe from Fig.~\ref{fig:Error_Power} that the average BER increases as the value of $\alpha$ decreases, which implies that the error performance decreases as the symbol rate increases. It can also be observed that if the SNR is relatively low, the average BERs corresponding to $\alpha=0.8$, $\alpha=0.6$, and $\alpha=0.4$ are very close to that corresponding to $\alpha=1$. If the SNR is relatively high, the average BERs corresponding to $\alpha=0.8$, $\alpha=0.6$, and $\alpha=0.4$ do not increase much as compared to that of $\alpha=1$. Thus, the error performance of ZIMS-VFD is not much different from the case without ICI and is acceptable in practical communication systems.
\begin{figure}\label{fig:SINR_Gain_Xi_P}
\centering
\includegraphics[width=0.5\textwidth]{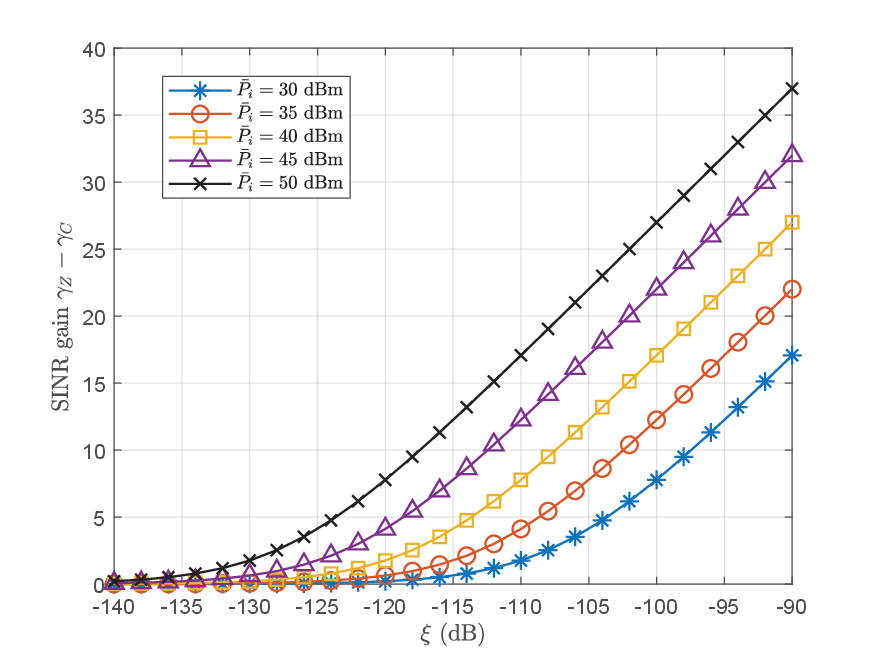}
\caption{The SINR gain of ZIMS-VFD over conventional FD with different values of SIC capability and $\bar {P}_i$.} \label{fig:SINR_Gain_Xi_P}
\end{figure}

Figure~\ref{fig:SINR_Gain_Xi_P} shows the SINR gain of ZIMS-VFD over conventional FD with SIC, which is equal to $\gamma_Z-\gamma_C$, where $\gamma_Z$ is the SNR (in dB) of ZIMS-VFD and $\gamma_C$ is the SINR (in dB) of conventional FD with SIC. In this paper, conventional FD with SIC refers to the FD system where the residual SI follows Gaussian distribution and its power is proportional to its transmit power. This model has been considered as a reasonable approximation in a number of existing works~\cite{Xiaofei_Xu_TVT_2017_17,Xiaofei_Xu_TVT_2017}. In this evaluation, we consider the BER performance of $U_1$. The expression of $\gamma_Z$ is
\begin{equation}\label{gamma_z}
\begin{split}
\gamma_Z=\frac{\sum_{\tilde k=1}^{4N}P_{2}^{\tilde k} \lambda_{1}^{\tilde k,m}}{4N\sigma_0^2}.
\end{split}
\end{equation}
Based on the residual SI model, the expression of $\gamma_C$ can be given by

\begin{equation}\label{gamma_c}
\begin{split}
\gamma_C=\frac{\sum_{\tilde k=1}^{4N}P_{2}^{\tilde k} \tilde H_{2,1}^{\tilde k}}{4N\sigma_0^2+\sum_{\tilde k=1}^{4N}\xi P_{1}^{\tilde k}}.
\end{split}
\end{equation}
where $\tilde H_{2,1}^{\tilde k}$ is the channel gain of the $\tilde k$-th subchannel and $\xi$ ($0\leq\xi\leq1$) is a parameter to characterize the SIC capability. If $\xi = 0$, the SI can be perfectly canceled. If $\xi = 1$, the SIC technique is invalid.
The number of subcarriers is set to $2N=1024$. The number of transmit antennas and receive antennas are set to $K_{T,i}=K_{R,i}=2$. The power constraint for each transceiver is set to $\bar P_i=30$~dBm, $\bar P_i=35$~dBm, $\bar P_i=40$~dBm, $\bar P_i=45$~dBm, and $\bar P_i=50$~dBm, which are averagely allocated to subchannels. It can be observed that the SINR gain increases as the value of $\xi$ increases, which implies that as compared with conventional FD with SIC, ZIMS-VFD has the better SINR performance if the SIC capability of conventional FD is relatively low. For example, if $\xi=-90$~dB and $\bar P_i=50$~dBm, the SINR gain of ZIMS-VFD over conventional FD with SIC is more than 35~dB. If the SIC capability of conventional FD is relatively high, for example, $\xi=-140$~dB, it has the similar performance with ZIMS-VFD. Also, we can see that the SINR gain increases as the value of $\bar P_i$ increases. This is because the residual SI power of conventional FD increases as the value of $\bar P_i$ increases, but ZIMS-VFD can avoid SI by sampling the receive signal in the SI-free intervals.
\begin{figure}
\centering
\includegraphics[width=0.5\textwidth]{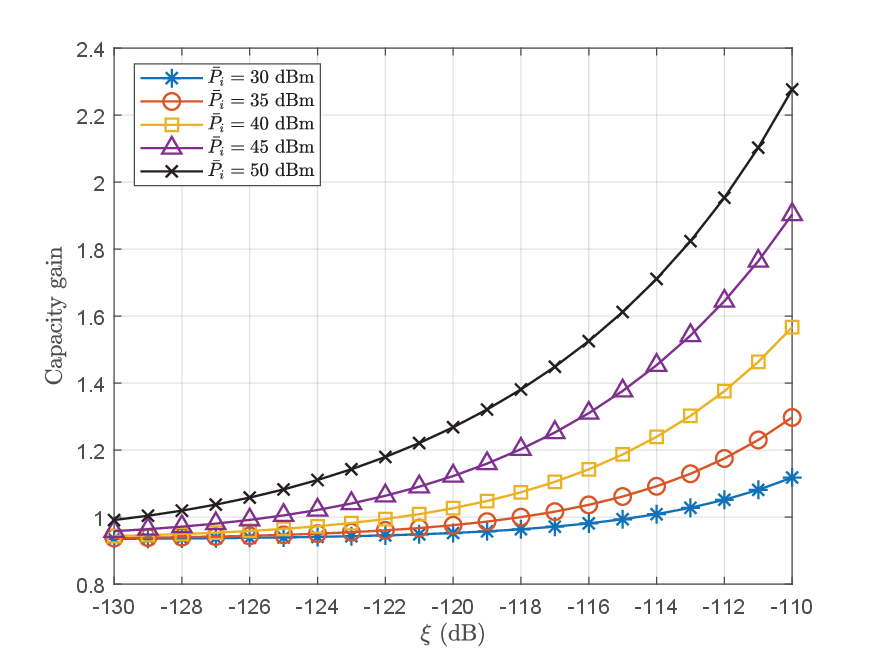}
\caption{The capacity gain of ZIMS-VFD over conventional FD with SIC under different values of $\xi$ and $\bar {P}_i$.}
 \label{fig:Sum_rate_Gain_Xi_P}
\end{figure}

Figure~\ref{fig:Sum_rate_Gain_Xi_P} shows the capacity gain of ZIMS-VFD over conventional FD with SIC, which refers to the ratio of the capacity of ZIMS-VFD to the capacity of conventional FD with SIC, under different values of $\xi$ and $\bar {P}_i$. The parameter settings are the same as those in Fig.~\ref{fig:SINR_Gain_Xi_P}.
It can be observed that the capacity gain increases as the value of $\xi$ increases, which implies that if the SIC capability of conventional FD is relatively weak, the capacity gain of ZIMS-VFD over conventional FD with SIC is relative large. For example, if $\xi=-110$~dB and $\bar P_i=50$~dBm, the capacity gain of ZIMS-VFD over conventional FD with SIC is more than 2. If the SIC capability of conventional FD is relatively strong, for example, $\xi=-130$~dB, it has the better performance than ZIMS-VFD. Also, since ZIMS-VFD can avoid SI instead of directly canceling SI in a certain proportion, the capacity gain increases as the value of $\bar P_i$ increases.
\begin{figure}
\centering
\includegraphics[width=0.5\textwidth]{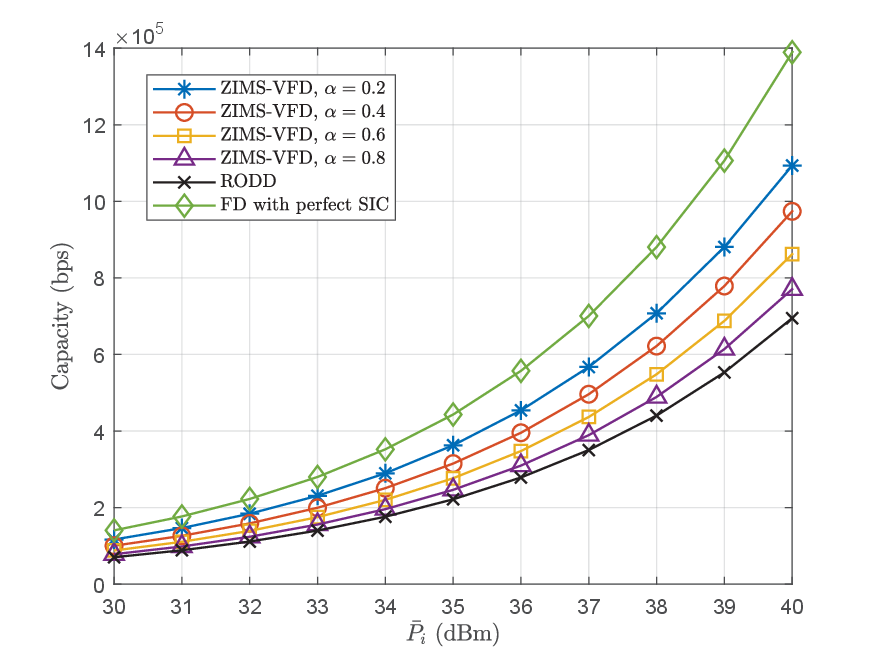}
\caption{The capacities of ZIMS-VFD, FD with perfect SIC, and RODD.}
 \label{fig:sum_rate_CMP_optimal}
\end{figure}

Figure~\ref{fig:sum_rate_CMP_optimal} compares the capacities of ZIMS-VFD, FD with perfect SIC, and RODD, where the subcarrier number is set to $2N=1024$, the number of transmit antennas and receive antennas are set to $K_{T,i}=K_{R,i}=2$. To maximize the throughput of RODD, the duty cycle of RODD is set to 1/2 and the coding rate of Vandermonde Reed Solomon codes is set to 1/2~\cite{RODD3}. It can be known from Fig.~\ref{fig:sum_rate_CMP_optimal} that the capacity of ZIMS-VFD is always lower than that of FD with perfect SIC and always higher than that of RODD. If the length of zero-interval and the transmit power are relatively small, for example, $\alpha=0.2$ and $\bar{P}_i=30$ dBm, ZIMS-VFD is close to FD with perfect SIC. Also, with a given $\alpha$, we can see that the difference between ZIMS-VFD and FD with perfect SIC increases as the transmit power increases.
\section{Conclusion}
In this paper, we proposed a novel FD communication technique, ZIMS-VFD, where two HD transceivers can simultaneously transmit signals to each other and each transceiver can effectively restore the desired symbols. The SIC techniques in the propagation domain and analog domain can be reduced. Based on OFDM, the transmission mechanism of ZIMS-VFD was introduced. We designed the transmit signal structure and determined the candidate interval, where both the SI and transitions are zero and the desired signal is within the data-interval. We have derived the condition that the candidate interval is always larger than zero. Also, we showed that the receive signal can be sampled in candidate intervals to restore the desired symbols. Numerical results verified our theoretical analyses and showed that ZIMS-VFD can effectively avoid SI and increase the capacity.
\section*{Appendix A \\Proof for Theorem 1}
It can be calculated from \eqref{XU1} and \eqref{XL1} that the possible vales of ${T}_{C,1}^m$ are
$T_Z-2\delta+\tau_{1,1}^{\min}-\tau_{1,1}^{\max}$,
$T_D\!+\!\tau_{1,1}^{\min}\!-\!\tau_{2,1}^{\min}$,
${T}_{Z}-2\delta+\tau_{2,1}^{\max}\!-\!\tau_{1,1}^{\max}$,
and
${T}_{D}+\tau_{2,1}^{\max}\!-\!\tau_{2,1}^{\min}$.
Due to \eqref{T_D_Condition}, we have $T_D\!+\!\tau_{1,1}^{\min}\!-\!\tau_{2,1}^{\min}>0$ and ${T}_{D}+\tau_{2,1}^{\max}\!-\!\tau_{2,1}^{\min}>0$. To guarantee ${T}_{C,1}^m>0$, $T_Z$ needs to satisfy
\begin{equation}\label{apdx1}
\begin{split}
T_Z>2\delta-\tau_{1,1}^{\min}+\tau_{1,1}^{\max}
\end{split}
\end{equation}
and
\begin{equation}\label{apdx2}
\begin{split}
{T}_{Z}>2\delta-\tau_{2,1}^{\max}\!+\!\tau_{1,1}^{\max}.
\end{split}
\end{equation}
On the other hand, based on \eqref{XU2} and \eqref{XL2}, the possible vales of ${T}_{C,2}^m$ are
$T_Z-2\delta+\tau_{2,2}^{\min}-\tau_{2,2}^{\max}$,
$T_Z\!-\!2\delta+\tau_{2,2}^{\min}\!-\!\tau_{1,2}^{\min}$,
$T_D+\tau_{1,2}^{\max}\!-\!\tau_{2,2}^{\max}$,
and
${T}_{D}+\tau_{1,2}^{\max}\!-\!\tau_{1,2}^{\min}$.
Due to \eqref{T_D_Condition}, we have $T_D+\tau_{1,2}^{\max}\!-\!\tau_{2,2}^{\max}>0$ and ${T}_{D}+\tau_{1,2}^{\max}\!-\!\tau_{1,2}^{\min}>0$.
To guarantee ${T}_{C,2}^m>0$, $T_Z$ needs to satisfy
\begin{equation}\label{apdx3}
\begin{split}
T_Z>2\delta-\tau_{2,2}^{\min}+\tau_{2,2}^{\max}
\end{split}
\end{equation}
and
\begin{equation}\label{apdx4}
\begin{split}
T_Z>\!\!2\delta-\tau_{2,2}^{\min}\!+\!\tau_{1,2}^{\min}.
\end{split}
\end{equation}
Since the difference between any two delays is less than $\tau_{\max}$, if \eqref{thm} is satisfied, \eqref{apdx1}-\eqref{apdx4} hold and all possible values of ${T}_{C,i}^m$ are always larger than 0. Thus, Theorem 1 follows.
\section*{Appendix B\\Proof for Theorem 2}
We can write $\mathbf{ V}_{i,m}$ as \eqref{V_i_Apdx}, which is at the top of the next page.
We can observe from \eqref{V_i_Apdx} that $\mathbf{V}_{i,m}$ is the product of a diagonal matrix, whose rank is $G$, and a Vandermond matrix, whose rank is $2N$. Thus, we have $\text{Rank}(\mathbf{V}_{i,m})\le \min\{G,2N\}=2N$. Also, according to Sylvester's inequality, we have $\text{Rank}(\mathbf{V}_{i,m})\ge G+2N-G=2N$. Therefore, $\text{Rank}(\mathbf{V}_{i,m})=2N$. Furthermore, since $\text{Rank}(\mathbf{ H}_{3-i,i})=2N$, following the similar steps we can conclude that $\text{Rank}(\mathbf{ V}_{i,m}\mathbf{ H}_{3-i,i})=2N$. Theorem 2 follows.
\section*{Appendix C\\Proof for Theorem 3}
We can write $\mathbf{V}_{i,m}^{\rm MIMO}$ as
\begin{equation}\label{V_i_MA_apdx}
\mathbf{V}_{i,m}^{\rm MIMO}=\left[\begin{array}{cccc}
\mathbf{V}_{i,m} &  & \\
  & \ddots &  \\
&  & \mathbf{V}_{i,m}
\end{array}\right]\mathbf{H}_{3-i,i}^{\rm MIMO}.
\end{equation}
\begin{figure*}
\begin{equation}\label{V_i_Apdx}
\begin{split}
\mathbf{V}_{i,m}&=\left[\begin{array}{cccc}
e^{j2\pi f_{-N+1}\hat{t}_{i}^{1,m}} & &   \\
   & \ddots&  \\
   &    & e^{j2\pi f_{-N+1}\hat{t}_{i}^{G,m}}
\end{array}\right]
\times\left[\begin{array}{cccc}
1 &e^{j2\pi \Delta f\hat{t}_{i}^{1,m}}&  \cdots & (e^{j2\pi \Delta f\hat{t}_{i}^{1,m}})^{2N} \\
1 &e^{j2\pi \Delta f\hat{t}_{i}^{2,m}}&  \cdots & (e^{j2\pi \Delta f\hat{t}_{i}^{2,m}})^{2N} \\
\vdots  &\vdots& \ddots & \vdots \\
1 &e^{j2\pi \Delta f\hat{t}_{i}^{G,m}}&  \cdots & (e^{j2\pi \Delta f\hat{t}_{i}^{G,m}})^{2N}
\end{array}\right].
\end{split}
\end{equation}
\hrulefill
\vspace{-15pt}
\end{figure*}It can be observed that $\mathbf{V}_{i,m}^{\rm MIMO}$ is the product of a block diagonal matrix, whose rank is $2NK_{R,i}$, and $\mathbf{H}_{3-i,i}^{\rm MIMO}$. Thus, we have $\text{Rank}(\mathbf{V}_{i,m}^{\rm MIMO})\leq\text{Rank}(\mathbf{H}_{3-i,i}^{\rm MIMO})$. In addition, according to Sylvester's inequality, we have $\text{Rank}(\mathbf{V}_{i,m}^{\rm MIMO})\geq 2NK_{R,i}+\text{Rank}(\mathbf{H}_{3-i,i}^{\rm MIMO})-2NK_{R,i}=\text{Rank}(\mathbf{H}_{3-i,i}^{\rm MIMO})$. Based on above analyses, we can conclude that $\text{Rank}(\mathbf{V}_{i,m}^{\rm MIMO})=\text{Rank}(\mathbf{H}_{3-i,i}^{\rm MIMO})$. Theorem 3 follows.
\bibliographystyle{IEEEbib}
\bibliography{ref}
\end{document}